\newtheorem{theorem}{Theorem}
\newtheorem{lemma}{Lemma}
\newcommand{\be}{\begin{equation}}
\newcommand{\ee}{\end{equation}}
\newcommand{\ba}{\begin{align}}
\newcommand{\ea}{\end{align}}
\newcommand{\tr}{\mathrm{Tr}}
\newcommand{\inp}[2]{\ensuremath{\langle #1|#2 \rangle} }
\newcommand{\proj}[1]{\ensuremath{| #1\rangle\!\langle #1 |} }
\newcommand{\me}{\mathrm{e}}
\definecolor{darkgreen}{RGB}{50,190,50}
\definecolor{darkblue}{RGB}{0,0,190}
\definecolor{darkred}{RGB}{238,0,0}
\newcommand{\Tr}{\mathrm{Tr}}
\newcommand{\bra}[1]{\ensuremath{\left\langle#1\right\rvert}}
\newcommand{\ket}[1]{\ensuremath{\left\lvert#1\right\rangle}}
\begin{document}

\title{Work and Reversibility in Quantum Thermodynamics}
\author{\'{A}lvaro M. Alhambra}
\affiliation{University College London, Department of Physics \& Astronomy, London WC1E 6BT, United Kingdom}
\author{Stephanie Wehner}
\affiliation{QuTech, Delft University of Technology, Lorentzweg 1, 2611 CJ Delft, Netherlands}
\affiliation{Centre for Quantum Technologies, National University of Singapore, 117543 Singapore}
\author{Mark M. Wilde}
\affiliation{Hearne Institute for Theoretical Physics,
Department of Physics and Astronomy,
Center for Computation and Technology,
Louisiana State University,
Baton Rouge, Louisiana 70803, USA}
\author{Mischa P. Woods}
\affiliation{University College London, Department of Physics \& Astronomy, London WC1E 6BT, United Kingdom}
\affiliation{QuTech, Delft University of Technology, Lorentzweg 1, 2611 CJ Delft, Netherlands}
\affiliation{Centre for Quantum Technologies, National University of Singapore, 117543 Singapore}

\begin{abstract}
It is a central question in quantum thermodynamics to determine how irreversible is a process that transforms an initial state $\rho$ to a final state $\sigma$, and whether such irreversibility can be thought of as a useful resource. For example, we might ask how much work can be obtained by thermalizing $\rho$ to a thermal state $\sigma$ at temperature $T$ of an ambient heat bath. Here, we show that, for different sets of resource-theoretic thermodynamic operations, the amount of entropy produced along a transition is characterized by how reversible the process is. More specifically, this entropy production depends on how well we can return the state $\sigma$ to its original form $\rho$ without investing any work. At the same time, the entropy production can be linked to the work that can be extracted along a given transition, and we explore the consequences that this fact has for our results. We also exhibit an explicit reversal operation in terms of the Petz recovery channel coming from quantum information theory. Our result establishes a quantitative link between the reversibility of thermodynamical processes and the corresponding work gain. 
\end{abstract}
\maketitle

\section{Introduction}
Quantum thermodynamics is experiencing a renaissance in which ideas from quantum information theory enable us to understand thermodynamics for even the 
smallest quantum systems. Our inability to apply statistical methods to a small number of particles and the presence of quantum coherences make this a challenging undertaking. 
Yet, we are now indeed able to construct very small quantum devices allowing us to probe such regimes experimentally \cite{koski2014,Ignacio2015a,Ignacio2015b}. Theoretical
results studying the efficiency of small thermal machines~\cite{woods2015,tajima2014r,skrzypczyk2011,entenchance_cooling,Brandner2015,mitchelson2015},
catalysis~\cite{Ng2014,lostaglio2014e,Aberg2014}, work extraction~\cite{HO-limitations,Aberg2013,dahlsten2011i,definingworkEisert,gemmer2015s,scully2003e,Binder2015,salek2015}, and the second laws of quantum thermodynamics~\cite{2ndlaws,Lostaglio2015} have furthermore led to the satisfying conclusion that the usual laws of thermodynamics as we know them can be derived from the laws of quantum mechanics in an appropriate limit. 

Here we are concerned with the fundamental problem of how irreversible is the transformation of a state $\rho_S$ to a state $\sigma_S$ of some system~$S$ in the presence of a thermal bath, and how that irreversibility is related to the work that can be extracted through the same transformation.
In this regard, the second laws~\cite{2ndlaws} provide general constraints on these transitions
, which are necessary and sufficient if $\rho_S$ is diagonal in the energy eigenbasis
of the system.  
Special instances of this problem have drawn particular attention, such as \emph{gaining} the maximum amount of work from $\rho_S$ by thermalizing it to the temperature of the surrounding
bath~\cite{HO-limitations}, extracting work from correlations among different subsystems when $\rho_S$ is a multipartite state (see, e.g.~\cite{extworkcorr}), as well as the case when $\sigma_S$
results from a measurement on~$\rho_S$~\cite{philipp2015,Takahiro2008,Jacob2009}. When thinking about \emph{investing} work, one of the most
well studied instances is Landauer's erasure~\cite{landauer}, which is concerned with the amount of energy necessary to take an arbitrary state $\rho_S$ to a pure
state $\sigma_S$.

We adopt the resource theory approach of~\cite{jarzing2000,fernando2013,HO-limitations}, which has the appealing feature of explictly accounting for all energy flows. We will focus on the quantitative features of the irreversibility of thermodynamical processes that take an initial state $\rho_S$ to a final state $\sigma_S$. 
In particular, we here show that a key quantity, namely the decrease of free energy or \emph{entropy production} is related to how well a particular thermodynamical process can be reversed. In turn, this quantity is directly related to how much work can be extracted in the transition $\rho_S \rightarrow \sigma_S$.

\section{Preliminaries}\label{sec:intro}
Let us now describe a prominent class of processes that we will be dealing with, known in the resource theory approach as \emph{thermal operations}~\cite{2ndlaws}. Given a particular fixed temperature $T$, we may access a bath described by a Hamiltonian $H_B$ and thermal state $\hat{\tau}_B = \exp(-\beta H_B)/Z_B$,
where $\beta = 1/(kT)$ is the inverse temperature\footnote{Here, $k$ is the Boltzmann constant.} and $Z_B$ is the partition function. 
Let $H_S$ be the Hamiltonian associated with the system $S$
and let $U$ denote a unitary that acts on the system $S$, a battery system $W$, and the bath $B$.  
The only unitary transformations $U$ that are allowed are those that conserve total energy. That is, the allowed unitaries are such that $[U,H]=0$,
where $H = H_S + H_W  + H_B$ is the total Hamiltonian. The transformation $\mathcal{T}$ performing the mapping $\mathcal{T}(\rho_S \otimes \proj{0}_W) = \sigma_S \otimes \proj{1}_W$ then takes the following form
\begin{equation}\label{eq:TOaction}
\mathcal{T}(\eta_{SW}) = \tr_{B}[U(\eta_{SW} \otimes \hat{\tau}_B)U^\dagger]
\end{equation}
for some input state $\eta_{SW}$ of the system and the battery.
Other classes of thermodynamic operations are discussed in Sections \ref{sec:ext} and \ref{sec:Gibbs preserving map}. Given that $U$ conserves total energy, it is clear that this framework accounts for all energy flows, making it particularly appealing for studying quantum thermodynamics.

We shall focus on the following quantity, known as entropy production
\begin{equation}\label{eq:entprod}
F(\rho_S)-F(\sigma_S),
\end{equation}
where $F(\omega_{S}) = \tr[H_{S} \omega_{S}] - kT \,S(\omega_{S})$ is the Helmholtz free energy and the von Neumann entropy is defined as $S(\omega_S)=-\tr[\omega_S \log \omega_S]$.\footnote{All logarithms in this paper are base $e$. Furthermore, here and throughout, we take the convention that the operator logarithm is evaluated only on the support of its argument.} This entropy production is always non-negative under the action of thermal operations \cite{fernando2013}. In the absence of work extraction or expenditure, the change of energy in the system is equal to the negated change of energy in the bath 
\begin{equation}
\tr[H_{S} \sigma_{S}]-\tr[H_{S} \rho_{S}]=- Q,
\end{equation}
where $Q$ denotes heat. This is due to energy conservation. 
In the limit of an infinite heat bath, we have that $ Q = \beta \delta S_B$ (the heat and the change of entropy of the bath are proportional). Thus, in that case, we can understand the quantity in \eqref{eq:entprod} as the sum of the change of entropy of the system and bath separately, which is always positive.



How much work could we gain by transforming $\rho_S$ to $\sigma_S$ using such a bath? This question can be answered by asking about the largest value of $W_{\rm gain}(\rho_S \rightarrow \sigma_S) = W$ that can be achieved by a thermodynamical operation belonging to the particular class in question, e.g., thermal operations,
in the transition made by the map in \eqref{eq:TOaction}. The standard second law tells us that this transformation is possible only if
\begin{align}\label{eq:stdSecond}
F(\rho_S \otimes \proj{0}_W) \geq F(\sigma_S \otimes \proj{1}_W), 
\end{align}
where $H_{SW} = H_S + H_W$ and $S(\omega_{SW}) = - \tr[\omega_{SW} \log \omega_{SW}]$.
Using the fact that $\tr[H_W \proj{0}_W] = 0$ and $\tr[H_W \proj{1}_W] = W_{\rm gain}(\rho_S \rightarrow \sigma_S)$, we can use~\eqref{eq:stdSecond} to obtain the following upper bound on the amount of work that we can hope to obtain 
\begin{align}\label{eq:WUpper1}
W_{\rm gain}(\rho_S \rightarrow \sigma_S) \leq F(\rho_S) - F(\sigma_S). 
\end{align} 
 That is, the entropy production upper bounds the deterministic work that can be extracted along a transition $\rho_S \rightarrow \sigma_S$. 
 
 In regimes in which the second law gives necessary and sufficient conditions for particular transitions $\rho_S \rightarrow \sigma_S$ to be possible, it follows that \eqref{eq:stdSecond} can be saturated for any states $\rho_S$ and $\sigma_S$, in which case we have a very tight relation between entropy production and work.
 An example of a regime where \eqref{eq:WUpper1} gives necessary and sufficient conditions occurs if we consider a non-deterministic work paradigm and allow the amount of work to fluctuate arbitrarily, in a transition in which the states are both diagonal in the energy eigenbasis and work is characterised by the mean value of the battery only \cite{skrzypczyk2014work}. Other examples are those in which the systems are extremely large \cite{fernando2013} or if we allow for a slightly inexact catalysis \cite{2ndlaws}. 
Specifically, if an arbitrary catalyst can be used, 
what we mean by this is that
 the ``error per particle'' in the output catalyst is bounded as $\|\eta^{\rm in}_C - \eta_C^{\rm out}\|_1 \leq \varepsilon/\log d_C$, where $d_C$ is the dimension of the catalyst and $\varepsilon>0$ is some tolerance~\cite{2ndlaws}. Similarly, if inexact catalysis takes on the form of allowing small correlations in the output catalyst, only the standard free energy
is relevant~\cite{lostaglio2014e}. A small caveat is that this regime is only achieved in transitions in which both $\rho_S$ and $\sigma_S$ are diagonal in the energy eigenbasis \cite{lostaglio2014thermodynamic}.



It is convenient to note \cite{D87} that the free energy can also be expressed in terms of the quantum relative entropy. 
Specifically, $F(\rho_S) = kT[D(\rho_S\|\tau_S) - \log Z_S]$, where $\tau_S = \exp(-\beta H_S)/Z_S$ is the thermal state of the system at the temperature $T$ of the ambient bath. 
The relative entropy is defined as \cite{umegaki1962conditional}
\begin{equation}
D(\rho\Vert\tau) := \tr[\rho \log \rho] - \tr[\rho \log \tau],
\end{equation}
when $\operatorname{supp}(\rho) \subseteq \operatorname{supp}(\tau)$ and equal to $+\infty$ otherwise.
Since we do not change the Hamiltonian of the system, we can hence express the amount of work in regimes in which the standard free energy is relevant as
\begin{align}
W_{\rm gain}(\rho_S \rightarrow \sigma_S) = kT \Delta,
\label{eq:w_gain_delta}
\end{align}
where we define the difference $\Delta$ of relative entropies, which plays a special role as it is proportional to the entropy production
\begin{equation}
\Delta \equiv D(\rho_S\Vert\tau_S)
 - D(\sigma_S \Vert \tau_S) = \beta 
F(\rho_S)-\beta F(\sigma_S).
\end{equation}

We also define a related quantity, which is the work that needs to be invested in doing the opposite transition, as $W_{\text{inv}}(\sigma_S \rightarrow \rho_S)$. This is how much work is needed to go deterministically from $\sigma_S$ to $\rho_S$.
In the nano-regime, it is possible
 that $W_{\rm gain} \neq W_{\rm inv}$.
 In fact, in general, we have the following relation:
 \begin{align}\label{eq:WUpper2}
 W_{\rm gain}(\rho_S \rightarrow \sigma_S) \leq F(\rho_S) - F(\sigma_S)\ \leq W_{\rm inv}(\sigma_S \rightarrow \rho_S) .
 \end{align} 
This also means that in the regimes in which the free energy gives necessary and sufficient conditions,
$W_{\rm gain}(\rho_S \rightarrow \sigma_S)= W_{\rm inv}(\sigma_S \rightarrow \rho_S)$; i.e., the amount of energy that we need to invest to transform $\rho_S$ to $\sigma_S$ is precisely equal to the amount of work that we can gain by transforming $\sigma_S$ back to $\rho_S$. Thus in this ``standard" free energy regime governed by the Helmholtz free energy $F(\rho_S)$, we see that we do not need to treat the amount of work gained as a separate case, but rather it can be understood fully in terms of the transformation of $\sigma_S$ back to $\rho_S$ in which work needs to be spent. 

It is useful to note that for systems $S$ that are truly small~\cite{HO-limitations}, or when we are interested in the case of \emph{exact} catalysis, 
this is not the case in general. In these situations, the standard second law needs to be augmented with more refined conditions~\cite{2ndlaws} that lead to differences.
With some abuse of terminology, we refer to this as the \emph{nano} regime.
In place of just one free energy, the nano regime requires that a family of free energies $F_\alpha$ satisfies
\begin{align}
F_{\alpha}(\rho_S) \geq F_{\alpha}(\sigma_S),
\end{align}
for all $\alpha \geq 0$. These generalized free energies can be expressed in terms of the $\alpha$-R{\'e}nyi divergences as
\begin{align}
F_{\alpha}(\rho_S) = kT[D_\alpha(\rho_S\|\tau_S) - \log Z_S],
\end{align}
where the general definition of $D_\alpha$~\footnote{For arbitrary states $\rho_S$, we have for $0 \leq \alpha < 1/2$ that $D_\alpha(\rho_S\Vert\tau_S) = \frac{1}{\alpha-1}\log\Tr[\rho_S^{\alpha}\tau_S^{1-\alpha}]$ \cite{P86} and for $\alpha \geq 1/2$, $D_\alpha(\rho_S\Vert\tau_S) = \frac{1}{\alpha-1} \log\left[\tr\left(\tau_S^{(1-\alpha)/(2\alpha)} \rho_S \tau_S^{(1-\alpha)/(2\alpha)} \right)^\alpha\right]$ \cite{MDSFT13,WWY13}.} takes on a simplified form if $\rho_S$ is diagonal in the energy eigenbasis. More precisely,
\begin{align}
D_\alpha(\rho_S\|\tau_S) = \frac{1}{\alpha - 1} \log \sum_{j} \rho_j^\alpha \tau_j^{1-\alpha},
\end{align}
where $\rho_j$ and $\tau_j$ are the eigenvalues of $\rho_S$ and $\tau_S$ respectively.
The standard free energy is a member of this family for $\alpha \rightarrow 1$. 
A short calculation~\cite{2ndlaws} yields that in this regime 
\begin{align}
\label{eq:gainLowerBound}
W_{\rm gain}(\rho_S \rightarrow \sigma_S) &\leq \inf_{\alpha \geq 0} kT\left[D_\alpha(\rho_S\Vert\tau_S) - D_\alpha(\sigma_S\Vert\tau_S)\right], \\
\label{eq:invLowerBound}
W_{\rm inv}(\sigma_S\rightarrow \rho_S) &\geq \sup_{\alpha \geq 0} kT \left[D_\alpha(\rho_S\Vert \tau_S) - D_\alpha(\sigma_S \Vert \tau_S)\right] \nonumber\\
&\geq kT\left[D(\rho_S \Vert \tau_S) - D(\sigma_S \Vert \tau_S)\right],
\end{align}
where (the first) inequalities are again attained if $\rho_S$ is diagonal in the energy eigenbasis.
 
\section{Result}

Our main result is the following relation between entropy production along a change of state and how well a particular change can be undone.
It takes the following form:
\begin{equation}\label{eq:lowbound}
F(\rho_S)-F(\sigma_S)\ge kT \, D(\rho_S \Vert \mathcal{R}_{\sigma \rightarrow \rho}(\sigma_S)),
\end{equation}
where $ \mathcal{R}_{\sigma \rightarrow \rho}$ is a thermal reversal operation using a bath at temperature $T$ that, when $F(\rho_S)-F(\sigma_S)$ is small, takes $\sigma_S$ close to the original state $\rho_S$.
If $\rho_S\rightarrow \sigma_S$ through a map of the form of \eqref{eq:TOaction}, this reversed channel is defined as
\begin{equation}
\mathcal{R}_{\sigma \rightarrow \rho}(\cdot) = \tr_B[U^\dagger ((\cdot)\otimes \tau_B) U].
\end{equation}
That is, the global unitary is reversed after using a new copy of the thermal bath state. This way, in the reversal operation we are ignoring both the correlations with the bath, and its change of state. We now relate the inequality in \eqref{eq:lowbound} to the work relative to the transition $\rho_S \rightarrow \sigma_S$.

{\bf Investing work.}
As outlined above, in the general regime in which not only the standard free energy is relevant, the amount of work $W_{\rm inv}(\sigma_S \rightarrow \rho_S) \geq 0$ we need to invest to transform $\sigma_S$ to $\rho_S$ is larger than the entropy production. Thus \eqref{eq:lowbound} together with \eqref{eq:invLowerBound} guarantee that
\begin{align}\label{eq:investResult}
W_{\rm inv}(\sigma_S\rightarrow \rho_S) \geq kT \, D(\rho_S \Vert \mathcal{R}_{\sigma \rightarrow \rho}(\sigma_S)),
\end{align}
where $\mathcal{R}_{\rho \rightarrow \sigma}$ is again the reversal operation. 
This says, for instance, that if not very much work needs to be spent in restoring $\rho_S$ from $\sigma_S$, then a particular thermal operation not involving any work would also recover $\rho_S$ from $\sigma_S$ well, as measured by the relative entropy distance.

This may not always be the case as for example the erasure of a thermal state $\sigma_S = \tau_S$ to a pure state $\rho_S$ costs a significant amount of work. 
There, the operation $\mathcal{R}_{\rho \rightarrow \sigma}$ will not change the thermal state of the system, effectively not recovering at all. Indeed this inequality also says that if the relative entropy is large, then the amount of work that we need to invest is large too. 
We illustrate this 
 application of our result
 in Section~\ref{sec:example} by means of a simple example of a harmonic oscillator bath.

{\bf Gaining work.} 
Let us focus on particular situations in which $W$ is characterized by the standard free energy (for physical examples of this regime see paragraph after that containing \eqref{eq:WUpper1} in Section \ref{sec:intro}). There, we have that $kT\Delta=W_{\rm gain}(\rho_S \rightarrow \sigma_S)$. In that case \eqref{eq:lowbound} states that the amount of work $W_{\rm gain}(\rho_S \rightarrow \sigma_S) \geq 0$ gained when transforming $\rho_S$ to $\sigma_S$ can be characterized by how well we can recover the state $\rho_S$ from $\sigma_S$ using a thermodynamic operation of the same class which requires \emph{no} work at all. More precisely 
\begin{align}\label{eq:main}
W_{\rm gain}(\rho_S \rightarrow \sigma_S) \geq kT \, D(\rho_S \Vert \mathcal{R}_{\sigma \rightarrow \rho}(\sigma_S)).
\end{align}

In this particular case a link is established between the reversibility of some transition and the amount of work that could be drawn from it. Loosely speaking, if little work can be obtained when transforming $\rho_S$ to $\sigma_S$ with a thermodynamic operation, then there exists a thermodynamic operation of the same class that can recover $\rho_S$ from $\sigma_S$ quite well. Or stated differently, if this thermodynamic operation performs badly at recovering $\rho_S$, 
then the amount of work that can be obtained in the transition $\rho_S \rightarrow \sigma_S$ can be large.  



\section{Proof for thermal operations}

\label{sec:first-result}

We now give details of our main result, which applies to the set of thermal operations (TO) without catalysts.
Section~\ref{sec:ext} contains
details of other, more general sets of operations.

Let us first suppose that we can draw a positive amount of work by transforming $\rho_S$ to $\sigma_S$, so that $\Delta > 0$. 
Note that in regimes dictated by the standard free energy, $\Delta > 0$ implies that there exists a different thermal operation
taking $\rho_S$ to $\sigma_S$ without drawing any work at all \cite{2ndlaws,Lostaglio2015}---in this case the additional energy can be deposited into the bath.
Let $V$ be the energy-conserving unitary that realizes this latter thermal operation, and
let $(\hat{\tau}_B,H_B)$ be the thermal state and Hamiltonian of the bath, such
that $\sigma_S = \tr_B[V(\rho_S \otimes \hat{\tau}_B)V^\dagger]$. Note that $V$ acts on systems $S$ and $B$ and
$[V,H_S + H_B] = 0$.
We have the following theorem:
	
\begin{theorem}\label{th:TO}
	Let $\mathcal{T}$ be a thermal operation given by
	\begin{equation}
	\mathcal{T} (\cdot)_S=\tr_{B}[{V ((\cdot)_S \otimes \hat \tau_B) V^\dagger}],
	\end{equation}
	where $V$ and $\hat{\tau}_B$ are defined above.
	Then it obeys the inequality
	\begin{equation}
	D(\rho_S \|\tau_S)-D(\sigma_S \|\tau_S) \ge D(\rho_S \|\mathcal{R} (\sigma_S)), \label{eq:boundF}
	\end{equation}
	where $\mathcal{R} (\cdot)$ is a recovery channel, which is another thermal operation given by
	\begin{equation}
		\mathcal{R} (\cdot)=\tr_{B}[{V^\dagger ((\cdot)_S \otimes \hat \tau_B ) V}].
	\end{equation}
	
\end{theorem}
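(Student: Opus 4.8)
The plan is to treat the displayed chain as two separate inequalities. The right-hand one, $D(\rho_S\|\mathcal{R}(\sigma_S)) \ge -\log F(\rho_S,\mathcal{R}(\sigma_S))$, is a generic two-state bound, and I would obtain it directly from the family $D_\alpha$ introduced above: $D_\alpha$ is nondecreasing in $\alpha$, with $D_1 = D$ the standard relative entropy and, for the squared-fidelity convention used here, $D_{1/2}(\rho\|\omega) = -\log F(\rho,\omega)$; evaluating between $\alpha = 1/2$ and $\alpha = 1$ gives the bound at once. The real content is therefore the left-hand inequality $D(\rho_S\|\tau_S) - D(\sigma_S\|\tau_S) \ge D(\rho_S\|\mathcal{R}(\sigma_S))$.

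For this I would lift the problem to the joint system--bath space and exploit that the energy-conserving $V$ leaves the global thermal state invariant. Writing $\Omega_{SB} := V(\rho_S\otimes\hat\tau_B)V^\dagger$, whose $S$-marginal is exactly $\sigma_S$, the condition $[V,H_S+H_B]=0$ forces $V(\tau_S\otimes\hat\tau_B)V^\dagger = \tau_S\otimes\hat\tau_B$, so invariance of relative entropy under $V(\cdot)V^\dagger$ gives $D(\rho_S\|\tau_S) = D(\rho_S\otimes\hat\tau_B\|\tau_S\otimes\hat\tau_B) = D(\Omega_{SB}\|\tau_S\otimes\hat\tau_B)$.

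The first key step is then the exact identity
\begin{equation}
D(\rho_S\|\tau_S) - D(\sigma_S\|\tau_S) = D(\Omega_{SB}\,\|\,\sigma_S\otimes\hat\tau_B).
\end{equation}
I would prove it by subtracting $D(\Omega_{SB}\|\sigma_S\otimes\hat\tau_B)$ from $D(\Omega_{SB}\|\tau_S\otimes\hat\tau_B)$: the $-S(\Omega_{SB})$ terms and the two $\log\hat\tau_B$ contributions cancel, and since the $S$-marginal of $\Omega_{SB}$ is $\sigma_S$ the surviving terms reassemble precisely into $D(\sigma_S\|\tau_S)$. This converts the free-energy difference into a single relative entropy against a product reference built from the output marginal $\sigma_S$, and is the crux of the argument. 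The second step reverses the dynamics: applying $V^\dagger(\cdot)V$ sends $\Omega_{SB}\mapsto\rho_S\otimes\hat\tau_B$ and $\sigma_S\otimes\hat\tau_B\mapsto V^\dagger(\sigma_S\otimes\hat\tau_B)V$, so by unitary invariance $D(\Omega_{SB}\|\sigma_S\otimes\hat\tau_B) = D(\rho_S\otimes\hat\tau_B\,\|\,V^\dagger(\sigma_S\otimes\hat\tau_B)V)$; tracing out $B$ and invoking the data processing inequality, together with $\tr_B[V^\dagger(\sigma_S\otimes\hat\tau_B)V] = \mathcal{R}(\sigma_S)$, yields $\ge D(\rho_S\|\mathcal{R}(\sigma_S))$, closing the chain.

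The temptation here is to reach for heavy universal-recoverability theorems (Petz, Fawzi--Renner) to obtain a recovery-type lower bound, whereas the thermal structure—$V$ fixing $\tau_S\otimes\hat\tau_B$ and the bath being re-initialized in $\hat\tau_B$—makes the bound follow from just unitary invariance plus data processing. The points needing genuine care are (i) the product-reference identity above, where all the cancellation happens, and (ii) the support conditions ensuring $D(\rho_S\|\mathcal{R}(\sigma_S))$ is the correct finite quantity; here it helps to note, as the surrounding discussion claims, that $\mathcal{R}$ is exactly the Petz recovery map of $\mathcal{T}$ with respect to $\tau_S$, which in particular fixes $\tau_S$ and makes the recovered state well defined.
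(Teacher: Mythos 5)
Your proposal is correct and follows essentially the same route as the paper's own proof: lifting to the joint system--bath space, using $V(\tau_S\otimes\hat\tau_B)V^\dagger=\tau_S\otimes\hat\tau_B$ and unitary invariance to establish the exact identity $D(\rho_S\|\tau_S)-D(\sigma_S\|\tau_S)=D(V(\rho_S\otimes\hat\tau_B)V^\dagger\|\sigma_S\otimes\hat\tau_B)$, then reversing the dynamics and applying data processing under $\tr_B$, with the fidelity bound coming from $D\ge D_{1/2}=-\log F$. Your direct subtraction argument for the key identity is just a repackaging of the paper's rewriting of the relative entropy difference (their Eq.~\eqref{eq:diffRewrite} plus the operator simplification), so there is nothing substantively different to flag.
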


\begin{proof}
\smallskip
Our proof is divided into two main steps.

{\bf Step 1: Rewriting the relative entropy difference.} 

Our first step is to rewrite $\Delta=D(\rho_S\|\tau_S) - D(\sigma_S\|\tau_S)$
as an \emph{equality} involving the operation $V$. Observe that%
\begin{align}
D(\rho_{S}\Vert\tau_{S})  & =D(\rho_{S}\otimes\hat{\tau}_{B}\Vert\tau
_{S}\otimes\hat{\tau}_{B})\\
& =D(V(\rho_{S}\otimes\hat{\tau}_{B})V^{\dag}\Vert V(\tau_{S}\otimes\hat{\tau
}_{B})V^{\dag})\\
& =D(V(\rho_{S}\otimes\hat{\tau}_{B})V^{\dag}\Vert\tau_{S}\otimes\hat{\tau
}_{B}),
\end{align}
where we have used the facts that the relative entropy
is invariant with respect to tensoring an ancilla state or 
applying a unitary, and $V$ is an energy-conserving unitary
so that $V(\tau_{S}\otimes\hat{\tau
}_{B})V^{\dag} = \tau_{S}\otimes\hat{\tau
}_{B}$. 


For density operators $\eta_{CD}$ and $\theta_{CD}$ such that
$\operatorname{supp}(\eta_{CD}) \subseteq
\operatorname{supp}(\theta_{CD})$, it is
possible to write 
\begin{equation}\label{eq:diffRewrite}
D(\eta_{CD}\Vert\theta_{CD})-D(\eta_{D}\Vert\theta_{D})
=\Tr(\eta_{CD}[ \log \eta_{CD} - \log \theta_{CD}-\log I_{C}\otimes \eta_{D}+\log
I_{C}\otimes \theta_{D}]). 
\end{equation}
Using these two facts, we can rewrite $\Delta$ as follows:%
\begin{equation} \label{eq:reduction}
D(\rho_{S}\Vert\tau_{S})-D(\sigma_{S}\Vert\tau_{S})
=\Tr\left(
V(\rho_{S}\otimes\hat{\tau}_{B})V^{\dag}
[\log V(\rho_{S}\otimes\hat{\tau}_{B})V^{\dag} - 
\log\tau_{S}\otimes
\hat{\tau}_{B}-\log\sigma_{S}\otimes I_{B}+\log\tau_{S}\otimes I_{B}]\right).
\end{equation}
  We can simplify the operator consisting of the last three terms on the right above as%
\begin{align}
 -\log\tau_{S}\otimes\hat{\tau}_{B}-\log \sigma_{S}\otimes I_{B}+\log
\tau_{S}\otimes I_{B}
& =-\log I_{S}\otimes\hat{\tau}_{B}-\log \sigma_{S}\otimes I_{B}\\
& =-\log \sigma_{S}\otimes\hat{\tau}_{B},
\end{align}
and thus conclude that
\begin{equation}
D(\rho_{S}\Vert\tau_{S})-D(\sigma_{S}\Vert\tau_{S}) = D(V(\rho_{S}\otimes\hat{\tau}_{B})V^{\dag}\Vert\sigma_{S}\otimes\hat{\tau}%
_{B}).
\end{equation}
Hence we have that the right-hand side is equal to
\begin{equation}
D(V(\rho_{S}\otimes\hat{\tau}_{B})V^{\dag}\Vert\sigma_{S}\otimes\hat{\tau}%
_{B})=D(\rho_{S}\otimes\hat{\tau}_{B}\Vert V^{\dag}(\sigma_{S}\otimes\hat
{\tau}_{B})V).
\end{equation}
Putting everything together, we see that%
\begin{equation} \label{eq:recovery-rewrite}
D(\rho_{S}\Vert\tau_{S})-D(\sigma_{S}\Vert\tau_{S})=D(\rho_{S}\otimes\hat
{\tau}_{B}\Vert V^{\dag}(\sigma_{S}\otimes\hat{\tau}_{B})V).
\end{equation}

Thus, the quantity $\Delta$ related to the work gain in \eqref{eq:w_gain_delta} is exactly equal to the ``relative entropy distance'' between the original state $\rho_S \otimes \hat{\tau}_B$ and the state resulting from the following thermal operation:
\begin{equation}
\sigma_S \rightarrow V^{\dag}(\sigma_{S}\otimes\hat{\tau}_{B})V,
\end{equation} 
which consists of adjoining $\sigma_S$ with a thermal state
$\hat{\tau}_{B}$ and performing the inverse of the unitary 
$V$. Note that this statement is non-trivial, since $\sigma_S \otimes 
\hat{\tau}_B \neq V(\sigma_S \otimes \hat{\tau}_B)V^\dagger$. The forward unitary operation $V$
can create correlations between the system and the bath, whereas $V^\dagger$ is applied to a fresh and entirely uncorrelated bath, making it a thermal operation.

\smallskip
{\bf Step 2: A lower bound using the recovery map.}
Due to the fact that the quantum relative entropy can never increase under the action of a partial trace~\cite{lindblad:mono,uhlman:mono}, we can conclude from \eqref{eq:recovery-rewrite} that the following 
inequality holds
\begin{equation} \label{eq:ineqmt}
D(\rho_{S}\Vert\tau_{S})-D(\sigma_{S}\Vert\tau_{S}) \geq D(\rho_{S} \Vert \mathcal{R}_{\sigma \rightarrow \rho}(\sigma_S)),
\end{equation}
where
\begin{align}\label{eq:recoveryOp}
\mathcal{R}_{\sigma \rightarrow \rho}(\sigma_S) = \tr_B [V^{\dag}(\sigma_{S}\otimes\hat{\tau}_{B})V].
\end{align}
This concludes the proof. Note that this operation is a thermal operation, and requires no work. 
\end{proof}


\subsection{Remark: Petz recovery map}
We remark that 
$\mathcal{R}$ is actually a special quantum map, called the Petz recovery map~\cite{pet86,pet88,BK02,HJPW04}.
For a general quantum channel $\mathcal{N}$ and a given density operator $\theta$, this recovery map is defined as
\begin{equation}\label{eq:petz}
\tilde {\mathcal{N}}(\cdot) =
\theta^{1/2} \mathcal{N}^\dag[\mathcal{N}(\theta)^{-1/2} (\cdot)
\mathcal{N}(\theta)^{-1/2} ] \theta^{1/2},
\end{equation}
where $\mathcal{N}^\dag$ is the adjoint of the channel
$\mathcal{N}$ \cite{Wat11}.
 As a consequence, we can conclude that the main conjecture from \cite{winter-li} holds for the special case of thermal operations. We show this in the following lemma.

\begin{lemma}
	The map $\mathcal{R}(\cdot)$ in \eqref{eq:recoveryOp} is the Petz recovery map of the original thermal operation, provided we choose the state $\theta$ in  \eqref{eq:petz} to be the thermal state $\tau_S$.
	\end{lemma}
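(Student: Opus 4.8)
The plan is to instantiate the Petz recovery map definition \eqref{eq:petz} with $\mathcal{N}=\mathcal{T}$ and $\theta=\tau_S$, and then simplify the resulting expression until it coincides literally with the recovery map $\mathcal{R}(\cdot)=\tr_B[V^\dagger((\cdot)\otimes\hat\tau_B)V]$. The whole computation rests on two ingredients already available in the excerpt: the energy-conservation identity $V(\tau_S\otimes\hat\tau_B)V^\dagger=\tau_S\otimes\hat\tau_B$, equivalently $[V,\tau_S\otimes\hat\tau_B]=0$, and the cyclicity of the partial trace over $B$ for operators supported on $B$. The first thing I would record is that $\tau_S$ is a fixed point of $\mathcal{T}$: applying $\mathcal{T}$ to $\tau_S$ and invoking energy conservation gives $\mathcal{T}(\tau_S)=\tr_B[V(\tau_S\otimes\hat\tau_B)V^\dagger]=\tr_B[\tau_S\otimes\hat\tau_B]=\tau_S$. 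Consequently the factor $\mathcal{T}(\theta)^{-1/2}$ in \eqref{eq:petz} becomes simply $\tau_S^{-1/2}$, and the Petz map reads $\tilde{\mathcal{T}}(\cdot)=\tau_S^{1/2}\,\mathcal{T}^\dagger[\tau_S^{-1/2}(\cdot)\tau_S^{-1/2}]\,\tau_S^{1/2}$.

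Second, I would compute the adjoint $\mathcal{T}^\dagger$ with respect to the Hilbert--Schmidt inner product. Writing out $\tr[X_S\,\mathcal{T}(\rho_S)]$, moving $V$ and $V^\dagger$ around under the full trace over $SB$, and re-isolating $\rho_S$ with the help of the identity $\tr_{SB}[A\,(\rho_S\otimes I_B)]=\tr_S[\tr_B[A]\,\rho_S]$, yields $\mathcal{T}^\dagger(X_S)=\tr_B[V^\dagger(X_S\otimes I_B)V\,(I_S\otimes\hat\tau_B)]$. In words, the bath is prepared thermal on the way in through $\mathcal{T}^\dagger$ and discarded on the way out.

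Third, and this is the crux, I would substitute this adjoint into the Petz formula and simplify. The cleanest route is to start from $\mathcal{R}(Y_S)$ and run toward the Petz expression, rather than the reverse. Using the factorization $Y_S\otimes\hat\tau_B=\Omega^{1/2}\,(\tau_S^{-1/2}Y_S\tau_S^{-1/2}\otimes I_B)\,\Omega^{1/2}$ with $\Omega^{1/2}:=\tau_S^{1/2}\otimes\hat\tau_B^{1/2}$, the commutation $[V,\Omega^{1/2}]=0$ lets me slide the two $\Omega^{1/2}$ factors through $V^\dagger$ and $V$ to the outside. Then I pull the $S$-factors $\tau_S^{1/2}$ out of $\tr_B$, and apply partial-trace cyclicity to merge the two remaining $\hat\tau_B^{1/2}$ factors back into a single $\hat\tau_B$. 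What remains is exactly $\tau_S^{1/2}\,\mathcal{T}^\dagger[\tau_S^{-1/2}Y_S\tau_S^{-1/2}]\,\tau_S^{1/2}=\tilde{\mathcal{T}}(Y_S)$, which is the claim.

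I expect the main obstacle to be the operator bookkeeping: correctly commuting $V$ with the square roots $\tau_S^{1/2}\otimes\hat\tau_B^{1/2}$ (which requires the energy-conservation commutation to be promoted to the square root), and ensuring that partial-trace cyclicity is applied only to factors acting on the traced-out bath $B$, so that the $\tau_S^{1/2}$ factors are extracted from $\tr_B$ before any cycling takes place. Once the fixed-point property and the adjoint are in hand, everything else is routine.
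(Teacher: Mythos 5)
Your proposal is correct and follows essentially the same route as the paper's proof: establish the fixed-point property $\mathcal{T}(\tau_S)=\tau_S$, compute the adjoint $\mathcal{T}^\dagger$, and then use the commutation of $V$ with $(\tau_S\otimes\hat\tau_B)^{1/2}$ together with partial-trace manipulations to identify the Petz map with $\mathcal{R}$. The only cosmetic differences are that you run the final algebra from $\mathcal{R}$ toward the Petz expression rather than the reverse, and you write the adjoint with a single $\hat\tau_B$ factor instead of split $\hat\tau_B^{1/2}$ factors, which are equivalent by cyclicity of the partial trace over $B$.
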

	\begin{proof}
Consider that to two density operators $\eta$ and
$\theta$ and a quantum channel $\mathcal{N}$, we can associate the relative entropy difference
$D(\eta\|\theta) -
D(\mathcal{N}(\eta)\|\mathcal{N}(\theta))$ and the Petz recovery channel of Eq. \eqref{eq:petz} above.
 For our case, we have that
\begin{equation}
\eta  = \rho_S , \qquad
\theta = \tau_S , \qquad
\mathcal{N}(\cdot)  = \tr_B [V((\cdot)_S\otimes\hat{\tau}_{B})V^{\dag}] ,
\end{equation}
which implies that
$
\mathcal{N}(\theta) = \tau_S.
$
Using the definition of the adjoint, one can show that
\begin{equation}
\mathcal{N}^{\dag} (\cdot)
= \tr_B\left[ \hat{\tau}_B^{1/2} V^{\dag} [ (\cdot)_S \otimes I_B ] V\hat{\tau}_B^{1/2}\right],
\end{equation}
which implies for our case that the Petz recovery channel takes the following form:
\begin{equation}
\tilde{\mathcal{N}}(\cdot) = 
\tau_S^{1/2} \tr_B\left[ \hat{\tau}_B^{1/2} V^{\dag} [
\tau_S^{-1/2}(\cdot)_S\tau_S^{-1/2} \otimes I_B ] V\hat{\tau}_B^{1/2}\right] \tau_S^{1/2} \label{eq:thermal-Petz} .
\end{equation}
We can rewrite this as follows:
\begin{align}
 \tr_B\left[ (\tau_S^{\frac{1}{2}} \otimes \hat{\tau}_B^{\frac{1}{2}}) V^{\dag} [
\tau_S^{-\frac{1}{2}}(\cdot)_S\tau_S^{-\frac{1}{2}} \otimes I_B ] V(\tau_S^{\frac{1}{2}} \otimes \hat{\tau}_B^{\frac{1}{2}})\right] 
& =  \tr_B\left[ (\tau_S \otimes \hat{\tau}_B)^{\frac{1}{2}} V^{\dag} [
\tau_S^{-\frac{1}{2}}(\cdot)_S\tau_S^{-\frac{1}{2}} \otimes I_B ] V(\tau_S \otimes \hat{\tau}_B)^{\frac{1}{2}}\right] \\
& =  \tr_B\left[ V^{\dag}(\tau_S \otimes \hat{\tau}_B)^{\frac{1}{2}}
\tau_S^{-\frac{1}{2}}(\cdot)_S\tau_S^{-\frac{1}{2}} \otimes I_B ] (\tau_S \otimes \hat{\tau}_B)^{\frac{1}{2}}V\right] \\
& = \tr_B [V^{\dag}((\cdot)_S\otimes\hat{\tau}_{B})V],
\end{align}
where we have used that $[V, \tau_S \otimes \hat{\tau}_B]=0$.
\end{proof}

\section{Example for thermal operations}

\label{sec:example}
Let us illustrate the reversal operation $\mathcal{R}_{\sigma \rightarrow \rho}$ by means of a simple example. Let $S$ be a two-level system, with Hamiltonian 
$H_S = E_S \proj{1}$. Let us take $\rho_S=\proj{0}_S$ and $\sigma_S =p_0\proj{0}_S+p_1\proj{1}_S$ with $p_0 \in [1-e^{-\beta E_S},1]$. When $p_0=1/2$, the opposite operation $\sigma_S \rightarrow \rho_S$ corresponds to Landauer erasure.
Recall that the reversal operation associated with the lower bound for the work in \eqref{eq:investResult} is determined by the operation that takes $\rho_S = \proj{0}_S$ to $\sigma_S$ without drawing any work.

For our simple example, consider a bath comprised of a harmonic oscillator $H_B = \sum_{n=0}^\infty E_n \proj{n}_B$ where $E_n = n \hbar \omega$.\footnote{We could have also written $E_n = (2n+1) \frac{\hbar}{2} \omega$, which is the same after re-normalizing. For notational convenience we have subtracted the constant $\frac{\hbar}{2} \omega$.} Note that, for each $n$, the gap between $n$ and $n+1$ is constant: $G = E_{n+1} - E_n = \hbar \omega$. 
To illustrate, let us consider the energy gap of the system to be equal to $E_S = \hbar \omega$---an example in which $E_S$ is a multiple of $\hbar \omega$ is analogous. 

\subsubsection{Transforming $\rho_S$ to $\sigma_S$}
Our first goal is to find the explicit operation that takes $\rho_S$ to $\sigma_S$, which has the effect of mixing the ground state of the system. 
Note that since $U$ conserves energy, $U$ is block diagonal in the energy eigenbasis belonging to different energies. 
More precisely, if the total Hamiltonian $H = H_S + H_B$ is block diagonal $H = \bigoplus_n E_n \Pi_{E_n}$ 
where $\Pi_{E_n}$ is the projector onto the subspace of energy $E_n = n\hbar \omega$ 
spanned by $\ket{0}_S\ket{0}_B$ for $n=0$ and $\{\ket{0}_S\ket{n}_B,\ket{1}_S\ket{n-1}_B\}$ for $n=1,2,3,\ldots$, then $U = \bigoplus_n U_{E_n}$, where $U_{E_n}$ is a unitary acting only on the subspace
of energy $E_n$. That is, $\Pi_{E_n} U_{E_n} \Pi_{E_n} = U_{E_n}$. 

Consider the unitary transformations $U_{E_n}$ defined by the following action:
\begin{align}
&U_{E_0} \ket{0}_S\ket{0}_B= \ket{0}_S\ket{0}_B=:\ket{\Psi_{E_0}},\label{eq:U1}\\
&U_{E_n} \ket{0}_S\ket{n}_B = \sqrt{b} \ket{0}_S\ket{n}_B + \sqrt{1-b} \ket{1}_S \ket{n-1}_B =: \ket{\Psi_{E_n}}\quad \text{for } n=1,2,3,\ldots\ ,\label{eq:U2}\\
&U_{E_n} \ket{1}_S\ket{n-1}_B  = \sqrt{1-b} \ket{0}_S\ket{n}_B - \sqrt{b} \ket{1}_S \ket{n-1}_B =: \ket{\Psi_{E_n}^{\perp}} \quad \text{for } n=1,2,3,\ldots \ ,\label{eq:U3} 
\end{align}
where $0\leq b\leq 1$ is a parameter that will be chosen 
in accordance with the desired target state $\rho_S$ below. 
It is useful to observe that in the subspace $\{\ket{0}_S\ket{n}_B,\ket{1}_S\ket{n-1}_B\}$, the unitary $U_{E_n}$ can be written as 
\begin{align}
U_{E_n} = \left(
\begin{array}{cc}
\sqrt{b} & \sqrt{1-b}\\
\sqrt{1-b} & - \sqrt{b}
\end{array}
\right),
\end{align}
which makes it easy to see that $U = U^\dagger$ is Hermitian.
Note that the states are normalized and 
$\inp{\Psi_{E_n}}{\Psi_{E_n}^{\perp}} = 0$ for $n=1,2,3,\ldots$. The bath thermal state is
\be
\hat\tau_B=\frac{1}{Z_B}\sum_{n=0}^\infty \me^{-n E_S\beta}\proj{n}_B,
\ee
where $Z_B = \sum_{n=0}^\infty e^{-n E_S\beta}=1/(1-\me^{-E_S\beta})$ is the partition function of the bath, and we have used the fact that $E_n = n\hbar \omega = n E_S$.
The unitary thus transforms the overall state as
\begin{align}
U(\proj{0}_S \otimes \hat\tau_B)U^\dagger 
& = \frac{1}{Z_B}\sum_{n=0}^\infty \me^{-n E_S \beta} 
U\left( \ket{0}_S\ket{n}_{B\;\;S}\!\bra{0}_{\;B}\!\bra{n} \right) U^\dag\\
& = \frac{1}{Z_B}\sum_{n=1}^\infty \me^{-n E_S \beta} 
U\left( \ket{0}_S\ket{n}_{B\;\;S}\!\bra{0}_{\;B}\!\bra{n} \right) U^\dag
+\frac{1}{Z_B}
U\left( \ket{0}_S\ket{0}_{B\;\;S}\!\bra{0}_{\;B}\!\bra{0} \right) U^\dag\\
& = \frac{1}{Z_B}\sum_{n=1}^\infty \me^{-nE_S\beta} \proj{\Psi_{E_n}}+\frac{1}{Z_B}\proj{0}_S\otimes\proj{0}_B\\
& =: \sigma_{SB}^0.\label{eq:rho0}
\end{align}
By linearity of the partial trace operation, we have that
\begin{align}
\Tr_B(\sigma_{SB}^0) & = \frac{Z_B-1}{Z_B}\left( b\proj{0}_S+(1-b)\proj{1}_S \right)+\frac{1}{Z_B}\proj{0}_S\\
& = p_0\proj{0}_S+p_1\proj{1}_S, \label{eq:rho0PT}
\end{align}
where
\begin{align}
p_0&=\frac{1}{Z_B}\left((Z_B-1)b+1\right),\label{eq:p 0 1 def}\\
p_1&=1-p_0.
\end{align}
Note that since $0 \leq b \leq 1$, $p_0\in[1/Z_B,1]=[1-\me^{-E_S\beta},1]$.
Solving~\eqref{eq:p 0 1 def} for $b$ gives
\be \label{eq:b in terms of p 0 1}
b=\frac{p_0 Z_B-1}{Z_B-1}.
\ee

\subsubsection{The reversal operation}
Let us now construct the reversal map $\mathcal{R}_{\sigma\rightarrow\rho}$.
This map can be written as
\begin{align}
\mathcal{R}_{\sigma \rightarrow \rho}(\sigma_S) 
= \tr_B\left[U^\dagger (\sigma_S \otimes \hat{\tau}_B) U\right]
= \tr_B\left[U (\sigma_S \otimes \hat{\tau}_B) U^\dagger\right],
\end{align}
where we have used the fact that $U = U^\dagger$.
To evaluate the reversal map for arbitrary $\sigma_S$, let us first note that by a calculation similar to the above
\begin{align}
U(\proj{1}_S \otimes \hat{\tau}_B)U^\dagger &= \frac{1}{Z_B} \sum_{n=0}^{\infty} 
e^{-nE_S \beta} \proj{\Psi_{E_{n+1}}^{\perp}} =: \sigma_{SB}^1.\label{eq:rho1}
\end{align}
Using the linearity of the partial trace, we furthermore observe that
\begin{align}
\tr_B\left[\sigma_{SB}^1\right] = (1-b)\proj{0}_S + b \proj{1}_S.\label{eq:rho1PT}
\end{align}
Using~\eqref{eq:rho0} and \eqref{eq:rho1} together with~\eqref{eq:rho0PT} and \eqref{eq:rho1PT}, we then have
\begin{align}
\mathcal{R}_{\sigma \rightarrow \rho}(\sigma_S) 
&= \tr_B\left[U (\sigma_S \otimes \hat{\tau}_B) U^{\dagger}\right]\\
&= p_0 \tr_B\left[\sigma_{SB}^{0}\right] + p_1 \tr_B\left[\sigma_{SB}^1\right]\\
&= p_0 \left(p_0 \proj{0}_S + p_1 \proj{1}_S\right) + 
p_1 \left(   (1-b) \proj{0}_S + b \proj{1}_S\right)\\
&= P_0^{\mathcal{R}} \proj{0}_S + P_1^{\mathcal{R}} \proj{1}_S,
\end{align}
with
\begin{align}\label{eq:P 0 R def}
P_1^\mathcal{R}&:=1-P_0^\mathcal{R},\\
P_0^\mathcal{R}&:= 
\left(p_0\right)^2 + \left(p_1\right)^2 \frac{Z_B}{Z_B-1}
= 
\left(p_0\right)^2 + \left(1-p_0\right)^2 \me^{E_S \beta},\label{eq:popo}
\end{align}
where we have used the fact that $p_0 + p_1 = 1$ and $Z_B = 1/(1-\me^{-E_S\beta})$.
We can now compute the lower bound for $W_\textup{inv}(\sigma_S\rightarrow \rho_S).$ We find
\begin{align} \label{eq:D cal recovery 1}
W_\textup{inv}(\sigma_S\rightarrow \rho_S)\,& \geq kTD(\rho_S\| \mathcal{R}_{\sigma\rightarrow\rho}(\sigma_S))\\
& =-kT\log P_0^\mathcal{R}.
\end{align}
Plugging in~\eqref{eq:popo} into \eqref{eq:D cal recovery 1} we find
\begin{align}
W_\textup{inv}(\sigma_S\rightarrow \rho_S)
& \geq -kT\log \left[ (p_0)^2+(1-p_0)^2 \me^{E_S \beta} \right],\label{eq:final expression}
\end{align} 
where we recall $p_0\in[1/Z_B,1]=[1-\me^{-E_S\beta},1].$

\subsubsection{Three special cases}
We examine three special cases of~\eqref{eq:final expression}:
\begin{itemize}
	\item[1)] Consider $p_0=1$. In this case we want to form the state $\proj{0}_S$ from the state $\proj{0}_S$. The work invested must clearly be zero in this case. The RHS of
	\eqref{eq:final expression} is also zero, and hence the bound \eqref{eq:investResult} is tight for this case.
	
	\item[2)] Consider $p_0=1/Z_S=1/(1+\me^{-E_S\beta})$. That is, we want to ``recover'' from a thermal state $\sigma_S=\tau_S$, as to get close to a ground state. In this case, the RHS of \eqref{eq:final expression} simplifies to $kTD(\rho_S\| \mathcal{R}_{\sigma\rightarrow\rho}(\sigma_S))=kT\log Z_S$.
	By direct calculation using the 2nd laws (using
	~\eqref{eq:gainLowerBound}--\eqref{eq:invLowerBound}) we find $W_\textup{gain}^\textup{nano}=W_\textup{inv}^\textup{nano}=(\log Z_S)/\beta$ and thus the bound is also tight for this case.
	
	\item[3)] Consider $p_0=1/Z_B$. That is, we want the recovery map to approach to a pure state from the state whose ground state population is the same as the ground state population of the harmonic oscillator bath. In this case,  \eqref{eq:final expression} reduces to $kTD(\rho_S\| \mathcal{R}_{\sigma\rightarrow\rho}(\sigma_S))=-kT\log[1 + \me^{-2 E_S\beta} - \me^{-E_S\beta}]$.
\end{itemize}

\section{Extending to more general operations involving catalysts} \label{sec:ext}

We now prove the following lemma, which highlights the condition that a given map has to obey for the proof of Section \ref{sec:first-result} to still hold.  
\begin{lemma}\label{le:general}
	Let $\mathcal{T}(\cdot)$ be a quantum channel with a full-rank steady state $\tau_S = \mathcal{T}(\tau_S)$ specified as
	\begin{equation}
	\mathcal{T}(\cdot)_S = \tr_E[{U( (\cdot)_S \otimes \rho_E) U^\dagger}],
	\end{equation}
	for some unitary $U$ and an environment state $\rho_E$, such that
	\begin{equation}\label{eq:nocorr}
	U (\tau_S \otimes \rho_E) U^\dagger = \tau_S \otimes \rho_E'.
	\end{equation}
	That is, at the fixed point no correlations with the environment are created. It then holds for an arbitrary initial state $\rho_S$, and $\sigma_S=\mathcal{T}(\rho_S)$ that
	\begin{equation}
	D(\rho_S \|\tau_S)-D(\sigma_S \|\tau_S) \ge D(\rho_S \|\mathcal{R} (\sigma_S)), 
	\end{equation}
	where $\mathcal{R} (\cdot)$ is the Petz recovery map for the channel $\mathcal{T}(\cdot)$, given by
	\begin{equation}
	\mathcal{R}(\cdot) = \tr_E[{U^\dagger ( (\cdot)_S \otimes \rho_E' ) U}].
	\end{equation}
\end{lemma}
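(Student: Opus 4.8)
The plan is to run the two-step argument of Theorem~\ref{th:TO} essentially verbatim, replacing the energy-conserving bath unitary $V$ and thermal state $\hat{\tau}_B$ by the dilating unitary $U$ and environment state $\rho_E$. The single place where energy conservation entered that proof was the identity $V(\tau_S\otimes\hat{\tau}_B)V^\dagger = \tau_S\otimes\hat{\tau}_B$; here the corresponding relation is furnished directly by the hypothesis~\eqref{eq:nocorr}. So in Step~1 I would write $D(\rho_S\|\tau_S) = D(\rho_S\otimes\rho_E\|\tau_S\otimes\rho_E) = D(U(\rho_S\otimes\rho_E)U^\dagger\|U(\tau_S\otimes\rho_E)U^\dagger) = D(U(\rho_S\otimes\rho_E)U^\dagger\|\tau_S\otimes\rho_E)$, using invariance of the relative entropy under adjoining an ancilla, invariance under a common unitary, and~\eqref{eq:nocorr} for the last equality.

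Next I would apply the rewriting~\eqref{eq:diffRewrite} with $C=E$ and $D=S$, taking $\eta_{CD}=U(\rho_S\otimes\rho_E)U^\dagger$ and $\theta_{CD}=\tau_S\otimes\rho_E$, so that the reduced operators are $\eta_D=\sigma_S=\mathcal{T}(\rho_S)$ and $\theta_D=\tau_S$. The three ``marginal'' terms then collapse exactly as before: $-\log(\tau_S\otimes\rho_E)-\log(\sigma_S\otimes I_E)+\log(\tau_S\otimes I_E) = -\log(\sigma_S\otimes\rho_E)$, which yields the equality $D(\rho_S\|\tau_S)-D(\sigma_S\|\tau_S)=D(U(\rho_S\otimes\rho_E)U^\dagger\|\sigma_S\otimes\rho_E)$. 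In Step~2 I would undo the unitary once more to get $D(\rho_S\otimes\rho_E\|U^\dagger(\sigma_S\otimes\rho_E)U)$, and then invoke monotonicity of the relative entropy under the partial trace $\tr_E$ to conclude $D(\rho_S\|\tau_S)-D(\sigma_S\|\tau_S)\ge D(\rho_S\|\mathcal{R}(\sigma_S))$ with $\mathcal{R}(\cdot)=\tr_E[U^\dagger((\cdot)_S\otimes\rho_E)U]$, as claimed.

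For the Petz identification I would observe that~\eqref{eq:nocorr} is equivalent to the commutation relation $[U,\tau_S\otimes\rho_E]=0$ (since $UAU^\dagger=A$ iff $UA=AU$). This is the only extra fact needed to reuse the computation of the Petz-recovery remark above verbatim, now with $\mathcal{N}=\mathcal{T}$ and $\theta=\tau_S$, so that $\mathcal{N}(\theta)=\mathcal{T}(\tau_S)=\tau_S$: absorbing $\tau_S^{1/2}\otimes\rho_E^{1/2}=(\tau_S\otimes\rho_E)^{1/2}$ and commuting it through $U^\dagger$ collapses the defining formula~\eqref{eq:petz} to exactly $\tr_E[U^\dagger((\cdot)_S\otimes\rho_E)U]$.

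I expect no genuinely new analytic difficulty here; the content lies in recognizing that~\eqref{eq:nocorr} is precisely the abstraction of energy conservation that both steps of the proof of Theorem~\ref{th:TO} require, and that it simultaneously delivers the commutation needed for the Petz identification. The one point warranting a line of care is the support condition required to apply~\eqref{eq:diffRewrite}: since $\tau_S$ is full rank, the support of $\tau_S\otimes\rho_E$ is the whole system space tensored with $\operatorname{supp}(\rho_E)$, and~\eqref{eq:nocorr} guarantees $\operatorname{supp}(U(\rho_S\otimes\rho_E)U^\dagger)\subseteq\operatorname{supp}(\tau_S\otimes\rho_E)$, so every logarithm and every relative entropy appearing above is well defined.
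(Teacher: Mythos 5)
Your proposal is correct and follows essentially the same route as the paper's own proof: the same three-equality rewriting using \eqref{eq:nocorr} in place of energy conservation, the same application of \eqref{eq:diffRewrite} and monotonicity under $\tr_E$, and the same commutation argument $[U,\tau_S\otimes\rho_E]=0$ for the Petz identification. Your explicit check of the support condition is a detail the paper leaves implicit, but it is consistent with the argument given there.
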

\begin{proof}
	Our proof follows similar steps to those of the particular case of thermal operations shown previously.
	We first write 
	\begin{align}\label{eq:relU}
	D(\rho_S \|\tau_S)&=D(\rho_S \otimes \rho_E \|\tau_S \otimes \rho_E) \\
	&=D(U (\rho_S \otimes \rho_E ) U^\dagger \| U (\tau_S \otimes \rho_E )U^\dagger)  \\
	&=D(U (\rho_S \otimes \rho_E) U^\dagger \|\tau_S \otimes \rho_E'),
	\end{align}
	where we have used
	the main assumption of the lemma from \eqref{eq:nocorr} and
	the facts that the relative entropy
	is invariant with respect to tensoring an ancilla state or 
	applying a unitary.
	
	Now we recall the identity of  \eqref{eq:diffRewrite} from the proof of Theorem \ref{th:TO}:	
	\begin{equation}
D(\eta_{CD}\Vert\theta_{CD})-D(\eta_{D}\Vert\theta_{D})
=\Tr(\eta_{CD}[ \log \eta_{CD} - \log \theta_{CD}-\log I_{C}\otimes \eta_{D}+\log
I_{C}\otimes \theta_{D}]),
	\end{equation}
	where $\operatorname{supp}(\eta_{CD}) \subseteq
\operatorname{supp}(\theta_{CD})$.
		We use it together with \eqref{eq:relU} to write
	\begin{equation}\label{eq:eqrel}
	D(\rho_{S}\Vert\tau_{S})-D(\sigma_{S}\Vert\tau_{S})
	=\Tr(U(\rho_{S}\otimes\rho_E)U^{\dag}
	[\log U(\rho_{S}\otimes\rho_E)U^{\dag} - \log\tau_{S}\otimes
	\rho_E'-\log \sigma_{S}\otimes I_{E}+\log\tau_{S}\otimes I_{E}] ).
	\end{equation}	
	The last three terms on the right-hand side above can be simplified significantly
	\begin{align}
	 - \log\tau_{S}\otimes \rho_E'
	 - \log\sigma_{S}\otimes I_{E}
	 +\log
	\tau_{S}\otimes I_{E}
	 & =
	 -\log I_{S}\otimes \rho_E'
	 -\log\sigma_{S}\otimes I_{E}\\
	& = -\log\sigma_{S}\otimes\rho_E',\label{eq:subs}
	\end{align}
	which leads to
	\begin{equation}
	D(\rho_{S}\Vert\tau_{S})-D(\sigma_{S}\Vert\tau_{S}) = D(U(\rho_{S}\otimes\rho_E)U^{\dag}\Vert \sigma_{S}\otimes \rho_E').
	\end{equation}
	 We also have that
		\begin{equation}
	D(U(\rho_{S}\otimes\rho_E)U^{\dag}\Vert \sigma_{S}\otimes \rho_E')=D(\rho_{S}\otimes \rho_E \Vert U^{\dag}(\sigma_{S}\otimes \rho_E')U).
		\end{equation}
	Putting everything together, we see that%
	\begin{align} 
	D(\rho_{S}\Vert\tau_{S})-D(\sigma_{S}\Vert\tau_{S})&=D(\rho_{S}\otimes \rho_E \Vert U^{\dag}(\sigma_{S}\otimes \rho_E')U) \\
	& \ge D(\rho_{S}\Vert \mathcal{R}' (\sigma_{S}) ).
	\end{align}
	What is left is to show that the recovery map $\mathcal{R}$ is indeed the Petz recovery map. Again this follows by the same reasoning as given previously for thermal operations.
	
	The adjoint of the map $\mathcal{T}(\cdot)$ is as follows
	\begin{equation}
	\left(  \cdot\right)  _{S}\rightarrow\tr_{E}\left[ \rho_E^{1/2}U^{\dag}\left(  \left(  \cdot\right)  _{S}\otimes I_E\right)
	U\rho_E^{1/2}\right],
	\end{equation}
	and, by definition the Petz recovery channel is given as
	\begin{equation}
	\left(  \cdot\right)  _{S}\rightarrow \tau_{S}^{1/2}\tr_{E}\left[  \rho_E^{1/2}U^{\dag}\left(
	\tau_{S}^{-1/2}\left(  \cdot\right)  _{S}\tau_{S}^{-1/2}\otimes I_E \right)  U \rho_E^{1/2} \right]  \tau_{S}^{1/2}.
	\end{equation}
	By a series of steps similar to those shown previously, we have that
	\begin{align}
	& \!\!\!\! \tau_{S}^{1/2}\tr_{E}\left[  \rho_E^{1/2}U^{\dag}\left(
	\tau_{S}^{-1/2}\left(  \cdot\right)  _{S}\tau_{S}^{-1/2}\otimes I_E \right)  U \rho_E^{1/2} \right]  \tau_{S}^{1/2}\nonumber\\
	& =\tr_{E}\left[  \left(\tau_S \otimes \rho_E \right)^{1/2} U^{\dag}\left(
	\tau_{S}^{-1/2}\left(  \cdot\right)  _{S}\tau_{S}^{-1/2}\otimes I_E \right)  U \left(\tau_S \otimes \rho_E \right)^{1/2} \right]  \\
	& =\tr_{E}\left[  U^{\dag} \left(\tau_S \otimes \rho_E' \right)^{1/2}\left(
	\tau_{S}^{-1/2}\left(  \cdot\right)  _{S}\tau_{S}^{-1/2}\otimes I_E \right)   \left(\tau_S \otimes \rho_E' \right)^{1/2} U \right]  \\
	& =\tr_{E}\left[  U^{\dag}\left(  \left(  \cdot\right)  _{S} \otimes \rho_E' \right)  U\right]  \\
	& =\mathcal{R} \left(
	\cdot\right)  ,
	\end{align}
	 We note that from \eqref{eq:nocorr}, multiplying by $U$ and $U^\dagger$ and taking the square root on both sides of the equation allows us to conclude $(\tau_S\otimes\rho_E)^{1/2} U^\dagger = U^\dagger(\tau_S\otimes\rho_E')^{1/2}$ and $U (\tau_S\otimes\rho_E)^{1/2} = (\tau_S\otimes\rho_E')^{1/2} U$. These equalities allow us to go from the second to the third line by using the assumption of  \eqref{eq:nocorr}.
\end{proof}

This lemma implies that for any quantum channel that has a dilation satisfying  the condition in \eqref{eq:nocorr}, we arrive at an inequality like that in   \eqref{eq:ineqmt}. In the next lemma, we define a further set of maps for which the  condition in \eqref{eq:nocorr} holds. 
 We say there is an \emph{isentropic catalytic thermal operation} (ICTO) from $\rho_{S}$ to $\sigma_{S}$, if there exists an energy-conserving unitary $V$ acting on the system
$S$, the bath $B$, and a set of $n$ \textit{isentropic catalysts} $\otimes_{i=1}^n \eta_{C_i}=:\eta_C$ on $C= \bigotimes_{i=1}^n C_i $ with initial states $\eta_{C_i}$
, such that
\begin{equation}\label{eq:catcon2}
\tr_{B}\left[  V\left(  \rho_{S} \otimes\hat{\tau}%
_{B}\otimes\eta_{C} \right)  V^{\dag}\right]  =\sigma_{SC},
\end{equation}
where $\tr_{C}[\sigma_{SC}]=\sigma_S$. The unitary $V$  conserves the energy of the bath, the system, and all the catalysts, so that $[V,H_S+H_B+ H_{C}]=0$, where $H_C:=\sum_{i=1}^n H_{C_i}$ are the Hamiltonians of the catalysts. This said, correlations between the different catalysts $\eta_{C_i}$ are allowed in the final state. For every ICTO, we can define an associated channel $\mathcal{H}_S\rightarrow \mathcal{H}_S$: 
\be\label{eq:nICTO channel def}
\mathcal{T}_{ICTO}(\cdot):= \tr_{BC}\left[  \sigma_{SBC}'(\cdot)\right], \quad \sigma_{SBC}'(\cdot):=V\left(  (\cdot) \otimes\hat{\tau}%
_{B}\otimes\eta_{C} \right)  V^{\dag}.
\ee 
The isentropic catalysts are required to satisfy the following:
\begin{itemize} 
	\item [1)] S($\tr_{S \, C \setminus C_l} [{\sigma_{SC}}])=S(\eta_{C_i}) \,\forall \, i$, meaning that the local states of the catalysts return to states of equal entropy to the initial states.
	\item[2)] When the input to the channel in \eqref{eq:nICTO channel def} is the thermal state $\tau_{S}$ of the system, the entropy and mean  energy of the catalysts are non-increasing and non-decreasing respectively: $S(  \eta_{C} )\geq S( \sigma_{C}' )$ and $\tr[ H_C \eta_{C} ]\leq \tr[H_{C} \sigma_{C}' ]$.
\end{itemize}
Condition 1) guarantees that the catalysts are not degraded in the sense of an entropy change, while (as will become evident in the following lemma) condition 2) guarantees that the channel $\mathcal{T}_{ICTO}(\cdot)$ is Gibbs preserving, which is a physically relevant condition for a channel resulting from a thermodynamic process. This new class of operations is between TO and Gibbs preserving maps.

These conditions are different from the ones that apply to catalytic thermal operations as usually defined in the literature \cite{2ndlaws}. For those, a stronger version of condition 1) holds, but condition 2) does not necessarily hold. However, since condition 2) is only required to hold for the von Neumann entropy and mean energy, rather than requiring exact catalysis, it is feasible (given what is known about work embezzlement with inexact catalysts \cite{Ng2014}) that one can always construct a catalyst large enough, such that for every catalytic thermal operation transforming $\rho\rightarrow\sigma$, there exists another catalytic thermal operation also transforming $\rho\rightarrow\sigma$ (possibly with a larger catalyst) such that condition 2) is satisfied. If such a family of catalytic thermal operations exists, it would be very satisfying since via the following lemma it would mean that there is a subset of catalytic thermal operations which allow for all possible transformations as the full set, yet with the additional physically relevant property of belonging to the class of Gibbs preserving maps.

We now show that given the assumptions above, the operations defined as such obey the conditions of Theorem \ref{thm:general-result}.
\begin{lemma}\label{le:commutation}
	For every ICTO channel as defined in \eqref{eq:nICTO channel def} the following hold:
	\begin{itemize}
		\item [1)] The channel is Gibbs preserving: $\mathcal{T}_{ICTO}(\tau_S)=\tau_S$.
		\item[2)] The isentropic catalysts do not become correlated with the bath or the system when the input to the channel is thermal:
			\begin{equation}\label{eq:main result eq Lemma 3}
		V\left(  \tau_{S}\otimes\hat{\tau}%
		_{B} \otimes\eta_{C} \right)  V^{\dag}=\tau_{S}\otimes\hat{\tau}%
		_{B}\otimes\sigma_{C}'.
		\end{equation}
	\end{itemize}
\end{lemma}
\begin{proof}
	Let 
	$\hat \rho_{SB} = \text{Tr}_{C} [V\left(  \tau_{S}\otimes\hat{\tau}%
	_{B}\otimes^n_{i=1}\eta_{C_i}\right)  V^{\dag}]$ denote the local state of the system and the bath after the transformation,   and denote the total Hamiltonian as $H=H_S+H_B+\sum_{i=1}^n H_{C_i}$, the sum of all the local ones. Conservation of energy before and after the operation corresponds to the following:
	\begin{align}
	\text{Tr}[H V\left(  \tau_{S}\otimes\hat{\tau}%
	_{B}\otimes^n_{i=1}\eta_{C_i}\right)  V^{\dag}&]=\text{Tr}[H (\tau_{S}\otimes\hat{\tau}%
	_{B}\otimes^n_{i=1}\eta_{C_i})]=\\
	\text{Tr}[(H_S+H_B) (\tau_S\otimes \tau_B)]+ \text{Tr}[H_{C} \eta_{C}] &\leq\text{Tr}[(H_S+H_B) (\tau_S\otimes \tau_B)]+ \text{Tr}[H_{C} \sigma_{C}'].
	\label{eq:cons-energy}
	\end{align}
	Also, the total average energy is the sum of the local energies
	\begin{equation}
	\text{Tr}[H V\left(  \tau_{S}\otimes\hat{\tau}%
	_{B}\otimes^n_{i=1}\eta_{C_i} \right)  V^{\dag}]=\text{Tr}[(H_S+H_B) \hat \rho_{SB}]+ \text{Tr}[H_{C} \sigma_{C}'],
	\end{equation}
	and hence $\tr[(H_S+H_B) \hat \rho_{SB}]\leq \tr[(H_S+H_B) (\tau_S\otimes \hat\tau_B)]$. Let $T'$ be the temperature of the Gibbs state $\tau_{SB}'$ such that $\tr[(H_S+H_B) \hat \rho_{SB}]=\tr[(H_S+H_B)  \tau_{SB}']$\footnote{Note that such a $T'\geq 0$ always exists since $T'=0$ is the ground state, the Gibbs state mean energy is monotonically increasing with its temperature, and the mean energy of $\hat\rho_{SB}$ is upper bounded by a thermal state of the same Hamiltonian}. For a given fixed energy, the thermal state is the unique state with the highest entropy \cite[Theorem~1.3]{Carlen09}, and hence $S( \tau_{SB}' )\geq S(\hat \rho_{SB})$. Yet, since, $\tr[(H_S+H_B)  \tau_{SB}']\leq \tr[(H_S+H_B) (\tau_S\otimes \hat\tau_B)]$, it follows that the temperature $T$ of state $\tau_S\otimes \hat\tau_B$ satisfies $T\geq T'$, and thus by direct calculation $S(\tau_S\otimes\hat \tau_B )\geq S( \tau_{SB}')$. So we conclude that
	\be\label{eq:entropy catalyst inequality}
	 S(\tau_S\otimes\hat \tau_B )\geq S( \hat\rho_{SB}).
	\ee
	Now we consider the entropy before and after the transformation. Since the joint operation is a unitary, we have from unitary invariance and sub-additivity of quantum entropy  that
	\begin{align}
	S(\tau_S\otimes\hat \tau_B)+ S(\eta_{C})&=S(V\left(  \tau_{S}\otimes\hat{\tau}%
	_{B}\otimes\eta_{C}\right)  V^{\dag}) \\&
	=S(\hat{\rho}_{SBC})\\&
	\leq S(\hat \rho_{SB})+ S(\sigma_{C}')\\&
	\leq S(\hat \rho_{SB})+S(\eta_{C}), \label{eq:entropy}
	\end{align}
	where we have defined $\hat \rho_{SBC}:=V\left(  \tau_{S}\otimes\hat{\tau}_{B}\otimes\eta_{C}\right)  V^{\dag}$. Hence $S(\tau_S \otimes \hat \tau_B )\le S(\hat \rho_{SB})$. Given our conclusion in \eqref{eq:entropy catalyst inequality} regarding conservation of energy, it must then be the case that $S(\tau_S \otimes \hat \tau_B)=S(\hat \rho_{SB})$. At fixed von Neumann entropy, the thermal (Gibbs) state minimises the mean energy, and thus $S(\tau_S \otimes \hat \tau_B)=S(\hat \rho_{SB})$ implies $\tr[(H_S+H_B)  \hat\rho_{SB}]\geq \tr[(H_S+H_B) (\tau_S\otimes \hat\tau_B)]$. Since previously we concluded $\tr[(H_S+H_B)  \hat\rho_{SB}]\leq \tr[(H_S+H_B) (\tau_S\otimes \hat\tau_B)]$, we then have that $\tr[(H_S+H_B)  \hat\rho_{SB}]= \tr[(H_S+H_B)( \tau_S\otimes \hat\tau_B)]$. Thus the last equality together with $S(\tau_S \otimes \hat \tau_B)=S(\hat \rho_{SB})$ implies 
	\be \label{eq:input output equality bath sys}
	\tau_S \otimes \hat \tau_B=\hat \rho_{SB}=\hat \rho_{S}\otimes\hat \rho_{B}.
	\ee 
  Using \eqref{eq:input output equality bath sys}, \eqref{eq:entropy}, and noting that by definition $\hat\rho_C=\sigma_C'$, we conclude
  \be 
  S(\hat\rho_{SBC})\leq S(\hat\rho_S\otimes\hat\rho_B\otimes\sigma_C') = S(\hat\rho_S\otimes\hat\rho_B\otimes\hat\rho_C)\leq  S(\hat\rho_S\otimes\hat\rho_B\otimes\eta_C)=S(\hat\rho_{SBC})
  \ee 
  Hence $ S(\hat\rho_S\otimes\hat\rho_B\otimes\hat\rho_C)=S(\hat\rho_{SBC})$, which is true iff $\hat\rho_{SBC}=\hat\rho_S\otimes\hat\rho_B\otimes\hat\rho_C$. Writing this in terms of $\tau_S$, $\hat\tau_B$, $\sigma_C'$ and $V\left(  \tau_{S}\otimes\hat{\tau}_{B} \otimes^n_{i=1}\eta_{C_i} \right)  V^{\dag}$ gives us \eqref{eq:main result eq Lemma 3}, completing the proof.
\end{proof}

Putting together Lemmas~\ref{le:general} and \ref{le:commutation} and taking the environment state $\rho_E$ from Lemma~\ref{le:general} to be the state of the bath and the set of catalysts (i.e., $\rho_E \equiv \hat \tau_B \otimes \eta_{C}$), we arrive at the following conclusion:
\begin{theorem}\label{thm:general-result}
	Let $\mathcal{T}_{ICTO}(\cdot)$ be an ICTO channel of the form in \eqref{eq:nICTO channel def} given by
	\begin{equation}
	\mathcal{T}_{ICTO}(\cdot)=\tr_{BC}[{U ( (\cdot)_S \otimes \hat \tau_B \otimes \eta_{C} )U^\dagger}].
	\end{equation}
	Then it obeys the inequality
	\begin{equation}\label{eq: pers like inequality applied}
	D(\rho_S \|\tau_S)-D(\sigma_S \|\tau_S) \ge D(\rho_S \|\mathcal{R} (\sigma_S)), 
	\end{equation}
	where $\mathcal{R} (\cdot)$ is the Petz recovery map given by
	\begin{equation}\label{eq:R prime applied}
	\mathcal{R}(\cdot) = \tr_{BC}[{U^\dagger ((\cdot)_S \otimes \hat \tau_B \otimes\sigma_{C}' ) U}].
	\end{equation}
The Petz recovery map preserves the Gibbs state $\mathcal{R}(\tau_S)=\tau_S$.
\end{theorem}
\begin{proof}
	 \eqref{eq: pers like inequality applied} and \eqref{eq:R prime applied} are a direct consequence of Lemmas \ref{le:general} and \ref{le:commutation}. The fact that the Petz recovery map preserves the thermal state follows from Lemma \ref{le:commutation} by inspection.
\end{proof}

%
%

\section{Gibbs preserving maps}\label{sec:Gibbs preserving map}

A general set of maps to which the conditions of Lemma \ref{le:general} do not apply is that of Gibbs preserving maps \cite{faist2015gibbs}, and we hence need a different method to prove an analogous result. To prove a bound in \eqref{eq:boundF}, we use the following general result for quantum maps from \cite{junge2015universal}:

\begin{theorem}
	Let $\mathcal{N}(\cdot)$ be a quantum channel, and let $\eta$ and $\theta$ be  quantum states. We have that
	\begin{equation}\label{eq:junge}
	D(\eta \| \theta) - D(\mathcal{N}(\eta) \| \mathcal{N}(\theta)) \ge - \int_\mathbb{R} \text{d}t\,\, p(t) \log{F(\eta, \ \tilde{\mathcal{N}}_t (\mathcal{N}(\eta)))},
	\end{equation}
	where $F(\rho,\sigma)=(\tr[\sqrt{\sqrt{\sigma}\rho \sqrt{\sigma}}])^2$ is the quantum fidelity, the map $\tilde{\mathcal{N}}_t$ is the following \emph{rotated} recovery map
	\begin{equation}
	\tilde{\mathcal{N}}_t(\cdot)=\theta^{i t/2} \tilde {\mathcal{N}} (\mathcal{N}(\theta)^{-i t/2} (\cdot) \mathcal{N}(\theta)^{i t/2})\theta^{-i t/2},
	\end{equation}
	with $ \tilde {\mathcal{N}}$ defined as in \eqref{eq:petz} 
	and $p(t)=\frac{\pi}{2} (\cosh (\pi t)+1)^{-1}$ is a  probability density function.
\end{theorem}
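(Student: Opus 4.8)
The statement is the universal recovery theorem of Junge, Renner, Sutter, Wilde and Winter, and the plan is to prove it by \emph{complex interpolation}, the analytic heart being the Stein--Hirschman strengthening of the Hadamard three-line theorem (equivalently, the Sutter--Berta--Tomamichel multivariate trace inequality), whose interpolation kernel is exactly the density $p(t)=\frac{\pi}{2}(\cosh(\pi t)+1)^{-1}$ appearing in the statement. First I would fix a Stinespring dilation $\mathcal{N}(\cdot)=\tr_E[W(\cdot)W^\dagger]$, so that the adjoint is $\mathcal{N}^\dagger(X)=W^\dagger(X\otimes I_E)W$, and (by a standard limiting/perturbation argument) assume $\theta$ and $\mathcal{N}(\theta)$ full rank so that all the inverse powers below are well defined. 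I would then record the identity $D(\eta\|\theta)-D(\mathcal{N}(\eta)\|\mathcal{N}(\theta))=\lim_{\alpha\to1}\big[\widetilde D_\alpha(\eta\|\theta)-\widetilde D_\alpha(\mathcal{N}(\eta)\|\mathcal{N}(\theta))\big]$, where $\widetilde D_\alpha$ is the sandwiched R\'enyi divergence, together with the fact that the order-$1/2$ object is $\widetilde D_{1/2}(\rho\|\sigma)=-2\log F(\rho,\sigma)$ (with $F$ the root fidelity, as defined in the statement). This is what ties the left-hand side (a relative-entropy difference, the $\alpha\to1$ endpoint) to a fidelity (the $\alpha=1/2$ object on the right).

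The core step is to build a bounded holomorphic family of operators $\{G(z)\}$ on the strip $\{0\le\Re z\le1\}$, continuous up to its boundary, assembled from the dilation $W$ and the complex powers $\theta^{z/2}$, $\mathcal{N}(\theta)^{-z/2}$ and $\mathcal{N}(\eta)$. It is arranged so that (i) on the line $\Re z=1$, i.e.\ $z=1+\mi t$, the relevant Schatten norm of $G$ reproduces the recovery fidelity $F(\eta,\tilde{\mathcal{N}}_t(\mathcal{N}(\eta)))$, the factors $\theta^{\mi t/2}$ and $\mathcal{N}(\theta)^{-\mi t/2}$ being precisely the analytic continuation of the Petz map that defines the \emph{rotated} map $\tilde{\mathcal{N}}_t$; and (ii) the opposite boundary (equivalently, the derivative at an endpoint extracted through the $\alpha\to1$ limit above) encodes the relative-entropy difference $D(\eta\|\theta)-D(\mathcal{N}(\eta)\|\mathcal{N}(\theta))$. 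Feeding $\log\|G(z)\|$ into the Stein--Hirschman interpolation theorem, whose kernel is exactly $p(t)$, then bounds the relative-entropy difference below by the $p(t)$-weighted integral of the line-$\Re z=1$ fidelities, giving the intermediate ``integral-outside'' estimate
\begin{equation}
D(\eta\|\theta)-D(\mathcal{N}(\eta)\|\mathcal{N}(\theta)) \ge -2\int_{\mathbb{R}}\md t\;p(t)\,\log F\big(\eta,\tilde{\mathcal{N}}_t(\mathcal{N}(\eta))\big).
\end{equation}

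To reach the stated ``integral-inside'' form I would finally invoke Jensen's inequality for the concave logarithm together with joint concavity of the root fidelity in its second argument: writing $\sigma_t=\tilde{\mathcal{N}}_t(\mathcal{N}(\eta))$, one has $\int \md t\,p(t)\log F(\eta,\sigma_t)\le\log\int \md t\,p(t)F(\eta,\sigma_t)\le\log F\big(\eta,\int \md t\,p(t)\,\sigma_t\big)$, so that multiplying by $-2$ reverses the inequalities and yields exactly $-2\log F(\eta,\int_{\mathbb{R}}\md t\,p(t)\,\tilde{\mathcal{N}}_t(\mathcal{N}(\eta)))$ as a lower bound, completing the argument. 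I expect the main obstacle to lie in the analytic bookkeeping of the second paragraph: verifying that $G(z)$ is genuinely holomorphic and uniformly bounded on the strip so that Stein--Hirschman applies, and, above all, rigorously extracting the relative-entropy difference from the endpoint. Because the operator logarithms do not commute, one must differentiate the R\'enyi quantity at $\alpha=1$ with care (e.g.\ via Lieb's theorem on concavity/differentiability of such trace functionals) and justify the interchange of the $\alpha\to1$ limit with the integral; the full-rank reduction is the other place where convergence must be controlled.
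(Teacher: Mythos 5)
The first thing to note is that the paper contains \emph{no proof} of this statement: it is imported wholesale from \cite{junge2015universal} (hence the header ``From \cite{junge2015universal}''), and the paper's only original content in Section~\ref{sec:Gibbs preserving map} is the remark, made after the theorem, that choosing $\theta=\tau_S$ for a Gibbs-preserving $\mathcal{N}$ makes each rotated map $\tilde{\mathcal{N}}_t$, and hence the averaged recovery map, Gibbs-preserving as well. So your attempt can only be compared against the proof in the cited work.

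Measured against that, your outline is a faithful and essentially correct reconstruction of the Junge--Renner--Sutter--Wilde--Winter argument. The ingredients you name are the right ones: the Stein--Hirschman strengthening of the three-line theorem, whose boundary kernel in the endpoint limit is exactly $p(t)=\frac{\pi}{2}(\cosh(\pi t)+1)^{-1}$; an operator family built from a Stinespring isometry $W$ and complex powers of $\theta$ and $\mathcal{N}(\theta)$ (concretely one may take $G(z)=\mathcal{N}(\theta)^{z/2}\,W\,\theta^{-z/2}\,\eta^{1/2}$ with Schatten exponents $p_0=2$, $p_1=1$, so that $\|G(\mi t)\|_2=1$ while $\|G(1+\mi t)\|_1$ is the root fidelity of $\eta$ with the rotated-Petz output); the relative-entropy difference arising as the derivative at the $\Re z=0$ endpoint; and finally Jensen's inequality plus joint concavity of the root fidelity to push the integral inside the logarithm. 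Your chain of inequalities in that last step points the right way (the stated ``integral-inside'' form is the weaker one), and you correctly track that $F$ here is the \emph{root} fidelity, so the prefactor is $-2\log F$ rather than the $-\log F$ of Theorem~\ref{th:TO}. The one place where your sketch is looser than the actual argument is the endpoint identification: what appears as $\frac{\mathrm{d}}{\mathrm{d}\theta}\log\|G(\theta)\|_{p_\theta}$ at $\theta=0$ is not literally the difference $\widetilde D_\alpha(\eta\|\theta)-\widetilde D_\alpha(\mathcal{N}(\eta)\|\mathcal{N}(\theta))$ of two separately defined sandwiched divergences, but a single R\'enyi-type generalization of the relative-entropy \emph{difference} (precisely the kind of quantity studied in \cite{SBW14}, which the paper's conclusion alludes to); proving that this derivative equals $D(\eta\|\theta)-D(\mathcal{N}(\eta)\|\mathcal{N}(\theta))$ is the hardest lemma of the cited proof. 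You flag this honestly (``differentiate with care'') without carrying it out, so what you have is a correct strategy with the key technical lemma left as a black box --- which is defensible, given that the paper leaves the entire theorem as one.
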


In the same way as in Theorem \ref{thm:general-result}, it can be seen by inspection that if we take the map
$\mathcal{N}$ to be Gibbs-preserving so that $\tau_S = \mathcal{N}(\tau_S)$ and if we set $\theta=\tau_S$, then the rotated recovery map is Gibbs-preserving as well, namely $\tilde{\mathcal{N}}_t(\tau_S)=\tau_S$. More explicitly, the bound on $\Delta$ is as follows:
\begin{equation}
\Delta=D(\rho_S \| \tau_S) - D(\sigma_S \| \tau_S) \ge - \int_\mathbb{R} \text{d}t\,\, p(t) \log{F(\rho_S, \tilde{\mathcal{N}}_t (\sigma_S))},
\end{equation}
where instead of having the relative entropy, we have the fidelity in the lower bound for the decrease of free energy.

\section{Conclusion}

We have shown how the amount of entropy produced along a thermal process that takes $\rho_S$ to $\sigma_S$  is directly linked to the reversibility of the process. Specifically, we see that if this quantity is small, then there exists a recovery operation that approximately restores the system to its initial state at no work cost at all. For thermal operations, this map is another operation in the same set and can be taken to be $\mathcal{R}_{\sigma \rightarrow \rho}$.

Our main result applies to the decrease of standard free energy, and it is a very interesting open question to extend our result 
to regimes in which we require the full set of second laws~\cite{2ndlaws}. What makes this question challenging is that~\eqref{eq:diffRewrite}
does not carry over to the regime of $D_{\alpha}$ for $\alpha \neq 1$, and indeed recent work~\cite{SBW14} suggests that other quantities 
naturally generalize the \emph{difference} of relative entropies---and this generalization does not always result in the difference of $\alpha$-R{\'e}nyi relative entropies. It hence forms a more fundamental challenge to understand whether the difference of such $\alpha$-relative entropies, or the quantities suggested
in~\cite{SBW14} should be our starting point. However, the quantities in~\cite{SBW14} would require a proof of a new set of second laws. 

We have applied our analysis to the case of investing work, which in the regime where only the standard free energy is relevant can be characterized fully
by how much work can be \emph{gained} by the inverse process. This relation to the inverse process is not true in the nano-regime where all the refined second laws of~\cite{2ndlaws} become relevant. Nevertheless, we have shown that the reversal operation of said inverse process can indeed be used to understand 
the amount of work that needs to be invested, adding another piece to the growing puzzle that is quantum thermodynamics. 

Since this initial work, there have been a series of recent results consisting in giving lower bounds to the decrease of relative entropy in different cases of interest, covering a number of different branches within quantum information theory, such as  
\cite{berta2016entropic,buscemi2016approximate,alhambra2016dynamical,lemm2016information,marvian2016clocks}, which shows the importance of the concept of recoverability and of recovery maps.

\section{acknowledgments}
We thank Jonathan Oppenheim, Christopher Perry, and Renato Renner for insightful discussions. 
MMW is grateful to SW and her group for hospitality during
a research visit to QuTech, and acknowledges support from startup
funds from the Department of Physics and Astronomy at LSU, the NSF\ under
Award No.~CCF-1350397, and the DARPA Quiness Program through US Army Research
Office award W31P4Q-12-1-0019.
MPW and SW acknowledge support from MOE Tier 3A Grant MOE2012-T3-1-009 and STW, Netherlands.

\end{document}